\newcommand{\R}{\mathbb{R}}
\newcommand{\inlineeqnum}{\refstepcounter{equation}~~\mbox{(\theequation)}}
\newtheorem{theorem}{Theorem}
\newtheorem{definition}{Definition} 
\title{\LARGE \bf
	Privacy of Real-Time Pricing in Smart Grid
}
\author{Mahrokh GhoddousiBoroujeni$^{1}$, Dominik Fay$^{2}$, Christos Dimitrakakis$^{3}$ and Maryam Kamgarpour$^{4}$
\thanks{This research was gratefully funded by the European Union ERC Starting Grant CONENE.}%
\thanks{$^{1}$is with Department of Electrical Engineering, Sharif University of Technology, Iran. e-mail: {\tt\footnotesize ghoddousi\_mahrokh@ee.sharif.edu}}%
\thanks{$^{2}$is with the EECS school, KTH Royal Institute of Technology, Sweden. e-mail: {\tt\footnotesize dominikf@kth.se}}%
\thanks{$^{3}$is with the Computer Science and Engineering faculty, Chalmers University of Technology, Sweden. e-mail: {\tt\footnotesize chrdimi@chalmers.se}}%
\thanks{$^{4}$is with the Automatic Control Laboratory, D-ITET, ETH Zurich, Switzerland. e-mail: {\tt\footnotesize mkamgar@control.ee.ethz.ch}}%
\thanks{This research was conducted while the first and second authors were doing an internship in D-ITET and Chalmers University of Technology.}%
}	
\begin{document}
\maketitle	
\IEEEpeerreviewmaketitle
\begin{abstract}
Installing smart meters to publish real-time electricity rates has been controversial while it might lead to privacy concerns. 
Dispatched rates include fine-grained data on aggregate electricity consumption in a zone and could potentially be used to infer a household's pattern of energy use or its occupancy. 
In this paper, we propose Blowfish privacy to protect the occupancy state of the houses connected to a smart grid.
First, we introduce a Markov model of the relationship between electricity rate and electricity consumption.
Next, we develop an algorithm that perturbs electricity rates before publishing them to ensure users' privacy.
Last, the proposed algorithm is tested on data inspired by household occupancy models and its performance is compared to an alternative solution.  
\end{abstract}

\section{INTRODUCTION}
Smart energy meters allow us to establish a two-way interactive system between consumers and energy providers through fetching users' power consumption in discrete time intervals and assigning an electricity rate based on demand and generation costs. This concept is known as \textit{Real-Time Pricing (RTP)}. 
The fluctuating rate enables users to adjust their consumption accordingly and shift it from high demand periods, at a higher rate, to low demand periods and save costs. From the provider's perspective, this scheme can help flatten the demand peaks and utilize the generation of power, specially renewables, in a more effective way. 

Even if consumers trust the electricity provider, publicly broadcast rates leak information about individuals, such as their occupancy and consumption patterns. 
We provide an algorithm for determining the rates retaining the utility of real-time pricing method while preserving the privacy of individual houses' occupancy. 
In this study, it is assumed that the curator, who broadcasts the electricity rates, is trusted and the problem of how to publish privacy-preserving rates that maintain the utility of the RTP scheme is considered.  A summary of the problem under study is depicted in Fig.~\ref{fig:setting}.

\paragraph{Related work}
We use a generalization of the mathematically rigorous notion of \textit{differential privacy (DP) c.f.~\cite{DPbook}}.
DP allows publishing statistical queries on a dataset so that no third party can infer much about any respondents. A fundamental approach in achieving DP is randomized responses to queries. However, while more noise better hides data, they reduce the accuracy and hence, the usefulness of the published query answers. This problem is referred to as the privacy-utility trade-off. 

In typical applications of DP, the dataset is in a matrix form, with each row corresponding to a user. Many real applications, such as smart metering, however, involve time series data. Then a user may be associated with multiple rows, hence increasing the possibility of privacy violation when DP is applied naively\cite{rig}. 
A straightforward solution is using group DP~\cite{DPbook} but this may add too much noise to be useful.
An alternative solution is to transform, e.g. using Discrete Fourier Transform\cite{DFT} or Wavelet Transform\cite{deploy}, a time-series dataset, so as to reduce each individual's data to a single row. However, these methods require the whole time-series in advance and are not useful for real-time applications such as smart metering.

Smart meters are an interesting application scenario for privacy. One previous approach considered installing batteries that charge or discharge randomly for each house~\cite{bat} but did not consider the time series element and required extra facilities, i.e. batteries.
A recent study~\cite{deploy} used Wavelet transform to avoid third parties with access to the fine-grained smart meter's data from obtaining information about the state of use of devices at each house. 
The state of the art approach in dealing with complex data acquisition scenarios is to capture the relationships between data and users with a model.
One idea is to define a graphical model that ties observations to secrets that must be kept~\cite{cor}. Although this approach is very promising, it scales exponentially. We instead are using the concept of Blowfish privacy.


\paragraph{Our contributions.}
First, we introduce a model from the Markov family describing the correlation of households' occupancy states over time.
Our model, in contrast to past work, has outputs in a continuous domain and a probabilistic mapping between the states and outputs.
Second, we design a mechanism to provide individual privacy against adversaries monitoring publicly available rates. This is unlike previous works, which only consider the privacy of smart meter data, which we can assume to be confidential between the household and the data provider anyway. Our approach, in contrast to past work, considers the temporal correlations in the dataset and outputs a result in each time step, consistent with the setting of a private RTP scheme. Third, we evaluate our method using synthetic electricity consumption data.

\textit{Notation:} We denote scalar variables, vectors, and matrices by x, X and \textbf{X} respectively. 
$X_i$ is the \textit{i}'th element in vector X and $\mathbf{X}_{ij}$ is the element in row i and column j of \textbf{X}.
Given a vector $X$ we refer to its norm $1$ by $||X||_1$.
Subscript \textit{i} and superscript \textit{t} refer to every variable related to household \textit{i} at time step \textit{t}. 

\section{PROBLEM STATEMENT}
\subsection{Real-Time Pricing Setting}\label{RTP_Setting}
The real-time pricing (RTP) scheme has been proposed as a method to adjust the power balance between supply and demand in smart grid systems\cite{RTP}. 
A trusted data curator, who may be electricity provider, publishes electricity rates in $T\in\mathbb{N}$ successive time steps based on total demand, including the demand of $N\in\mathbb{N}$ houses connected to a smart grid.
We denote the released rate at time step $t\in\mathbb{N}$ by $\hat{r}^t\in\R_+$. 

Each household \textit{i} uses an $X^t_i \in \R_+$ amount of energy in time step \textit{t} depending on a number of previously released rates, $\hat{r}^{t'}, t'{<}t$, and an occupancy indicator $S_i^t \in \{0,1\}$, where $S_i^t = 1$ denotes that household $i$ is occupied at $t$. 
The consumption amount is unknown to the curator or a third party, however, we assume it is described by a conditional probability density $f(X^t_i|S^t_i,\hat{r}^{t'})$.
A time-variant density function captures the geographical effects on electricity consumption and can be used in the rest of this paper, but we suppress this explicit dependency for notation simplicity.
We also assume different houses have independent consumption values and $X_i^t$ is upper-bounded by $u_i \in \R_+$.

The aggregate electricity consumption, $Z^t = \sum_{i=1}^N X^t_i \in \R_+$, is sent back to the trusted data curator. 
In this work, we assume the total demand is equal to this aggregate consumption and responds to the price signal, that is, it is elastic. 
Adding an inelastic term can readily be accounted for in our setting.
Consider the cost of generation as a strongly convex quadratic function $J(Z^t) = \frac{\alpha}{2} {(Z^t)}^2 {+} \beta Z^t {+} \gamma$, a commonly used model in energy optimization work\cite{cost}. 
Given the total demand $Z^t$, the generator aims to minimize the objective function $J(Z^t)-r^tZ^t$ by setting an electricity rate $r^t\in\R_+$. 
Using the first-order optimality condition we obtain that the optimal rate must satisfy $r^t=\alpha Z^t + \beta \quad \alpha,\beta \in \R_+ \inlineeqnum \label{eq:fr}$.

A summary of the problem setup is shown in Fig.~\ref{fig:setting}.
A third party observing the optimal electricity rates can obtain information about the total consumption, as a result of \eqref{eq:fr}.
Consequently, she can exploit the aggregated data to make inferences about the individual household's occupancy~\cite{DPbook}. 
While this is an absolute violation of the individuals' privacy, the electricity rate is public information and cannot be protected by data security methods that translate the data to other forms, such as encryption.  
Alternatively, the trusted curator employs a \textit{privacy-preserving mechanism} to perturb $r^t$ and publishes its result, $\hat{r}^t$.

While private mechanisms are well-studied for such aggregation of data, there are two major difficulties in the content of RTP in power grids: 
First, the process is repeated over time and the households' occupancy at each time are correlated.
Second, what needs to be kept secret is the occupancy, which is different from the individual consumption rates sent to the curator.
\begin{figure*}[t!]
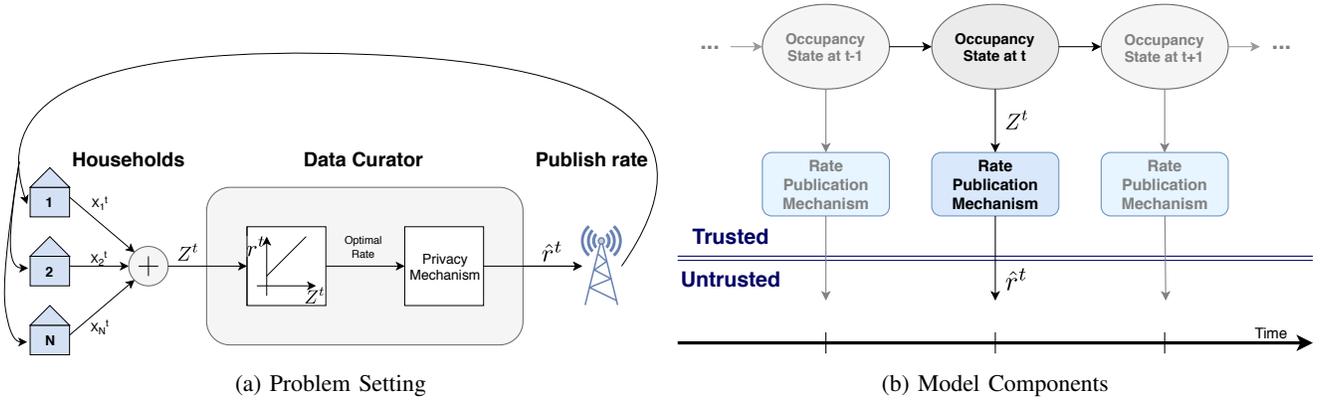

	\centering
	\begin{subfigure}[t]{0.49\linewidth}
	    \centering
		\includegraphics[width=\linewidth]{Setting.png}
		\caption{Problem Setting}
		\label{fig:setting}
	\end{subfigure}
	\begin{subfigure}[t]{0.49\linewidth}
	    \centering
		\includegraphics[width=\linewidth]{Model.png}
		\caption{Model Components}
		\label{fig:model}
	\end{subfigure}

	\caption{Problem Statement}
\end{figure*}

\subsection{Data Generation Model} \label{model}
Broadcasting an electricity rate close to the optimal one has the utility of optimizing the electricity provider's profit.
One can add sufficient perturbation or stochastic noise to the output (in this case the rates) to ensure privacy, but clearly, this reduces the utility.
As shown by\cite{lunch, onthe}, it is not possible to provide sufficient privacy and utility without making any assumptions about how the data is generated.
Hence, we need to take into account how people change their occupancy state, $S^t=(S^t_1,\cdots,S^t_N)\in\{0,1\}^N$, and how the occupancy state influences the consumption.
Based on past work on occupancy modeling\cite{learnHMM,occHMM,wilhelm}, we suppose $S^t$ forms a Hidden Markov Model (HMM) over time with the released rates as outputs and the following details, as in Fig.~\ref{fig:model}.
\begin{itemize}
	\item \textbf{States ($S^t\in\{0,1\}^N$):}
	Each state represents the occupancy of all houses under consideration. 
	There are $2^N$ different states denoted by $(M^1,\cdots,M^m)$ where $m=2^N$ and $M^i\in\{0,1\}^N~\forall i \in \{1,\cdots,m\}$.
	
	\item \textbf{Initial state probability distribution ($\Pi \in [ 0,1 ]^m$):} $\Pi = (\pi_1,\cdots,\pi_m)$ with $\pi_i = Pr[S^0 = M^i]$ is the probability distribution over all occupancy states when no rate is published (where $S^0$ is the initial state and \textit{Pr[$\cdot$]} denotes the probability of its argument).
	
	\item \textbf{Transition matrix ($\mathbf{A}^t\in \R_+^{m{\times} m}$):} How people change their occupancy state is modeled by $\mathbf{A}^t$, where $
	\mathbf{A}^t_{ij} = Pr[S^{t+1}=M^j |S^t=M^i]$.
	\item \textbf{Observation probability distribution:} 
	In contrary to the previously suggested models~\cite{wilhelm,learnHMM,occHMM} with $Z^t$ as the output, we embed a private mechanism in the HMM to take $Z^t$ as input and return $\hat{r}^t$. In this case, the observations, or model outputs, are the released rates and the observation probability distribution can be calculated from $f(X^t_i|S^t_i,\hat{r}^{t'})$ and the definition of $Z^t$.  
\end{itemize}
As the outputs of the model above are in the continuous domain $\R_+$, this model falls into the framework of a Continuous Density Hidden Markov Model (CDHMM).

Given limited prior knowledge of the initial state distribution or the transition matrix, we consider a finite set of CDHMMs, $\Theta=\{\theta{=}(\Pi^\theta,\mathbf{A}^{t, \theta})\}$, as suggested by\cite{cor}.
All models in $\Theta$ share an identical set of states and observation probability density function, but have an exclusive transition matrix $\mathbf{A}^{t, \theta}$ and initial state distribution $\Pi^\theta$. 

\subsection{Adversarial Model} \label{adv}
Every third party who wishes to increase her chance of guessing a single participant’s occupancy state is called an \textit{adversary}.
We consider the case in which the adversary does not have access to the consumption values, as in Fig.~\ref{fig:model}. 
We also assume that the number of houses in the grid is fixed and known to the adversary and she considers a set of possible models described by $\Theta$. 

Let $P^{t|t{-}1,\theta}\in\R^{m}$ represent a prior probability distribution over the states of the model $\theta \in \Theta$ before observing the output at time \textit{t}. 
Note that $\Pi^\theta$ is the same as the prior at $t{=}1$, i.e. $\Pi^\theta {=} P^{1|0, \theta}$.
After observing the model's output at $t$, we update the states' distribution by Bayesian inference to obtain the posterior probabilities $P^{t|t, \theta}\in\R^{m}$. 
The prior and posterior probabilities are related in consecutive time steps by,
\begin{equation}
	\label{eq:pri}
	P^{t|t{-}1,\theta} = P^{t{-}1|t{-}1, \theta} \times \mathbf{A}^{t, \theta},
\end{equation}	
\begin{equation}
	\label{eq:post}
	P^{t|t, \theta}_i = \frac{P^{t|t{-}1, \theta}_i \cdot Pr[\hat {r}^t|M^i]}{\sum_{j=1}^m {P^{t|t{-}1, \theta}_j \cdot Pr[\hat{r}^t|M^j]}} ~\forall i \in \{1,\cdots,m\}.
\end{equation}

We assume the model class $\Theta$ has been learned in advance from electricity consumption data. 
See, \cite{learnHMM} on applying machine learning to learn the occupancy model based on consumption data.
An adversary, who knows $\Theta$, may make strong conclusions, such as verifying a house is not occupied, by tracking the published rates. 
In the next section, we design a mechanism to ensure privacy considering such potential knowledge acquisition due to the continuous rate broadcast.
\section{ANALYSIS} \label{analysis}
In this section, first, we define differential privacy (DP) mathematically.
Next, we introduce a generalization of DP, Blowfish privacy (BP), and develop our instantiation of it.
Finally, we propose a private electricity rating mechanism based on the designated privacy notion and the data generation and adversarial models of Sections~\ref{model}, \ref{adv}.

\subsection{Blowfish Privacy} \label{Blowfish}
DP is the current standard on a mechanism's privacy evaluation.
A mechanism takes a database, a universe $U$ of all possible datasets, and a query as input and produces outputs (perturbed query result) by adding noise to the queries.
DP promises that every possible output is \textit{essentially} equally likely to occur, independent of the presence of a single individual\cite{DPbook}. 

\begin{definition}\label{def:DP}
	(Differential Privacy)
	Given a privacy parameter $\epsilon \in \R_+$, a mechanism $\mathcal{A} : U\rightarrow\R$ is $\epsilon$\textit{-differentially private} if for all database pairs $(D^t$, $\hat{D}^t)$ that differ in only one entry, $||D^t{-}\hat{D}^t||_1{=}1$, and every set of outputs $R \subseteq Range(\mathcal{A})$, 
	\begin{equation}
	Pr[\mathcal{A}(D^t) \in R] \leq \exp(\epsilon) Pr[\mathcal{A}(\hat{D}^t) \in R]. \label{eq:DP}
	\end{equation}
\end{definition}
The datasets in the pair $(D^t$, $\hat{D}^t)$ in~\ref{def:DP} are called \textit{neighbors}. 
In this work, the query of the electricity rate is answered $T$ times on the temporally correlated occupancy datasets $S^1,\cdots,S^T$.
If correlations are not modeled, rather all entries are assumed correlated, the required noise to ensure a privacy level would be too much and the RTP utility will be reduced.
There is no possibility to contemplate any information on the correlation of datasets, such as their evolution model, using DP.
On the other hand, the BP framework~\cite{Blowfish} is capable of exploiting the data generation model described in Section~\ref{model}. 

Blowfish privacy~\cite{Blowfish} extends DP by contemplating the information that must be kept hidden from the adversary and constraints that may be known about the data.
Information that we would like to protect about each user is named \textit{secret}, denoted by $\mathcal{S}^t$ at time $t$. 
\textit{Discriminative pairs}, $\mathcal{S}^t_{pairs}$, are properties that must be indistinguishable to the adversary. 
As we wish to hide every evidence on the occupancy of the houses, the dataset at time $t$ is the state of occupancy of every household $D^t \mathrel{\mathop:}= S^t$, the secrets are all entries of the dataset $\mathcal{S}^t \mathrel{\mathop:}= \{S^t_i | i = 1, \cdots, N\}$ and the discriminative pairs are \textit{(house i is empty at time \textit{t}, house i is full at time \textit{t})} for all households, $\mathcal{S}^t_{pairs}  \mathrel{\mathop:}= \{(S^t_i{=}0,S^t_i{=}1)| i = 1, \cdots, N\} \in \mathcal{S}^t \times \mathcal{S}^t$.

In this framework, adversarial knowledge is specified by a set of \textit{constraints}, $Q^t$, restricting the universe $U$  to the compatible ones.
The constraints indicate a subset of potential datasets from $U$ based on the acquired information from the correlations.
Hence, determining the possible datasets assessing the inferences made from existing correlations is the most crucial step to define the constraints.
An adversary with assumptions in Section~\ref{adv} may conclude that some states are impossible to happen observing the rates, i.e. there may exist states with zero prior probability in \eqref{eq:pri}.
Such information acquisition can be formulated by setting a constraint of $Q^t\mathrel{\mathop:}=$\textit{having non-zero prior probability at time $t$}.
We may need less perturbation and gain better utility from the RTP scheme by setting the constraint $Q^t$ and leaving incompatible datasets out.
\begin{definition}
	(Blowfish Neighbors) \label{neighbor}
	Given a set of discriminative pairs $\mathcal{S}^t_{pairs}$ and a set of constraints $Q^t$, $(D^t,\hat{D}^t)$ is a pair of \textit{Blowfish neighbors} if both datasets satisfy $Q^t$ and differ in only one entry belonging to $\mathcal{S}^t_{pairs}$.
\end{definition}

Defining BP relies on $\mathcal{S}^t_{pairs}$ and $Q^t$, in addition to $\epsilon$, and is different from DP in defining the neighbors.
\begin{definition} \label{privacy}
	(Blowfish Privacy) 
	Given $\epsilon \in \R_+$, discriminative pairs $\mathcal{S}^t_{pairs}$, and a set of constraints $Q^t$, a mechanism $\mathcal{A}:U \rightarrow \R$ is ($\epsilon,\mathcal{S}^t_{pairs},Q^t$)\textit{-Blowfish private} if for all Blowfish neighbor pairs $(D^t,\hat{D}^t)$ and every set of outputs $R\subseteq Range(\mathcal{A})$, Equation~\eqref{eq:DP} holds.
\end{definition}

There are two main differences between the privacy notion and the mechanism we use and the original BP framework in Definition~\ref{privacy}.
First, the dataset is a hidden variable, and not exactly the input to the mechanism, as opposed to the private mechanisms defined above. 
This stems from the fact that the query answer cannot be determined explicitly given $S^t$ as the dataset, but the total consumption value $Z^t$ is required as well.
In particular, in this case, the input to a mechanism $\mathcal{A}:[0,\sum_i^N u_i]\rightarrow\R$ is $Z^t$ but what we want to keep private is $S^t$.
Hence, if two groups of $N$ households with states of occupancy described by a Blowfish neighbor pair $(S^t,\hat{S}^t)$ consume $Z^t$ and $\hat{Z}^t$ amount of electricity respectively, 
\begin{equation}
\label{eq:MP}
Pr[\mathcal{A}(Z^t) {\in} R] \leq \exp(\epsilon) Pr[\mathcal{A}(\hat{Z}^t) {\in} R],
\end{equation}
must hold.
Second, while in the original BP framework the dataset is constrained\cite{Blowfish}, in our formulation the constraints are on the probability of having a specific dataset.
This type of constraint is introduced as a probabilistic constraint in\cite{dphmm}, in contrast to a regular deterministic constraint. 
Keeping these slight differences in mind, we will develop a private mechanism in the next part.

\subsection{Rate Publication Mechanism} \label{mechanism}
The proposed Blowfish private mechanism is provided in Algorithm~\ref{base}.
In this mechanism, the optimal rate is perturbed before being published, by adding noise sampled from the Laplace distribution, as suggested by\cite{DPbook}. 
The main novelty is how to calculate a proper noise variance to ensure privacy at each time interval while not compromising the utility.
In particular, the noise variances are chosen to account for the plausible correlation considered in the model class.

This algorithm iterates over the following steps at each time:
For each possible model $\theta{\in}\Theta$, the algorithm calculates the priors based on Equation~\eqref{eq:pri} (Lines~\ref{alg:pri1}-\ref{alg:pri2}). 
Accordingly, it forms the set $\mu^{t,\theta}$ of all states satisfying constraints in~\ref{Blowfish} (Line~\ref{alg:mu}). 
Searching over all Blowfish neighbors (defined as $\nu^{t,\theta}$ in Line~\ref{alg:nu}), a set of houses each discriminating at least one pair of Blowfish neighbors is formed (Line~\ref{alg:k}). 
This set is used to calculate the required noise deviation to guarantee privacy for $\theta$ (Line~\ref{alg:noise}). 
Repeating over all $\theta\in\Theta$, a privacy-preserving noise variance under every possible model is identified (Line~\ref{alg:propnoise}).
The optimal rate is perturbed by adding noise sampled from the specified distribution and the result is published (Lines~\ref{alg:pert1}-\ref{alg:pert2}).
Finally, the posterior is calculated to be used in the next time step (Line~\ref{alg:post}).
\begin{algorithm}[H]
\caption{Rate Publication Mechanism}\label{base} 
\textbf{Input:} privacy parameter $\epsilon$, time horizon $T$, rate function parameters $(\alpha,\beta)$, consumption bound for each household $u_i$, set of data generating models $\Theta$ \nonumber \noindent
\\
\textbf{Output:} Published rates $\hat{r}^1,\cdots,\hat{r}^T$\nonumber \noindent
\begin{algorithmic}[1] 
\State At each time step $t\in\{1,\cdots,T\}$:

\For{$\theta=(\mathbf{A}^{t, \theta},\Pi^ \theta) \in \Theta$} \label{alg:select}

\State \label{alg:pri1} initialize: $P^{1|0, \theta} \gets \Pi ^ \theta$ \Comment{Prior at step 1}

\State \label{alg:pri2} $P^{t|t{-}1, \theta} \gets P^{t{-}1|t{-}1, \theta}\mathbf{A}^{t, \theta}$ \Comment{Calculate prior} 

\State \label{alg:mu} $\mu^{t,\theta} \gets \{ M^i|P_i^{{t|t{-}1}, \theta} {\neq} 0\}$ \Comment{Non-zero prior states}

\State \label{alg:nu} $\nu^{t,\theta} \gets \{(M^i,M^j) \in \mu^{t,\theta} {\times} \mu^{t,\theta} \; | \; ||M^i{-}M^j||_1 = 1 \}$

\State \label{alg:k} $\kappa^{t,\theta}  \gets \{\textit{Entries that differ between pairs in }\nu^{t,\theta}\}$

\State \label{alg:noise} $\lambda^{t,\theta} \gets \alpha {\times} \max_{h \in \kappa^{t,\theta}} u_h$
\EndFor
\State \label{alg:propnoise} $\lambda^{t} \gets \max_{\theta \in \Theta} \lambda^{t,\theta}$	\Comment{Maximum over all models}
\State \label{alg:pert1} $r^t \gets \alpha \sum_{i=1}^N X^t_i + \beta$					\Comment{Optimal rate}
\State $N^t \sim Lap(\lambda^t / \epsilon)$						 		\Comment{Noise to be added}
\State $\hat{r}^t \gets r^t+N^t$ 										\Comment{Noisy rate}\\
\Return \label{alg:pert2} $\hat{r}^t$														\Comment{Publish rate}
\State \label{alg:post} Derive $ P^{t|t,\theta}~\forall \theta \in \Theta$ by Equation~\eqref{eq:post}
\State Go to time step \textit{t+1}
\end{algorithmic}	
\end{algorithm}

Our first result below shows privacy is guaranteed in each time step and the second one considers privacy promise after releasing the rates $T$ times.
\begin{theorem}
	\label{main}
	The proposed rate publication mechanism preserves ($\epsilon,\mathcal{S}^t_{pairs},Q^t$)-BP at each time step. 
\end{theorem}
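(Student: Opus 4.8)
The plan is to reduce Theorem~\ref{main} to the classical $\epsilon$-DP guarantee of the Laplace mechanism, after (i) specialising Definitions~\ref{neighbor}--\ref{privacy} to our choice of secrets and constraints, (ii) bounding the sensitivity of the rate query over Blowfish neighbours by $\lambda^t$, and (iii) handling the fact that the protected object $S^t$ enters the mechanism only through the \emph{random} aggregate $Z^t$ rather than deterministically. Throughout I fix the time step $t$ and treat the already-published rates $\hat r^{t'},\,t'{<}t$, and hence the priors $P^{t|t{-}1,\theta}$, as given, which is what ``at each time step'' means.

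First I would unpack what a Blowfish neighbour pair is here. Since every coordinate flip $(S^t_i{=}0,S^t_i{=}1)$ is a discriminative pair in $\mathcal{S}^t_{pairs}$, a pair $(S^t,\hat S^t)$ satisfying Definition~\ref{neighbor} is exactly a pair of occupancy vectors that both satisfy $Q^t$ and have $\|S^t-\hat S^t\|_1 = 1$; let $i$ be the unique household in which they differ. For an adversary holding model $\theta\in\Theta$, the constraint $Q^t$ (nonzero prior $P^{t|t{-}1,\theta}$) restricts the admissible states to the set $\mu^{t,\theta}$ formed in Line~\ref{alg:mu}; hence $(S^t,\hat S^t)\in\nu^{t,\theta}$ and, by the definition of $\kappa^{t,\theta}$ in Line~\ref{alg:k}, $i\in\kappa^{t,\theta}$. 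Therefore $u_i \le \max_{h\in\kappa^{t,\theta}} u_h = \lambda^{t,\theta}/\alpha \le \lambda^t/\alpha$, the last step being Line~\ref{alg:propnoise}, so the single noise scale $\lambda^t/\epsilon$ simultaneously covers every model the adversary could entertain. For the sensitivity bound, note that the consumptions $X^t_j$, $j\ne i$, depend only on $S^t_j$ and on the common published history, so couple them to be equal and write $Z^t = c + X^t_i$, $\hat Z^t = c + \hat X^t_i$ with common $c=\sum_{j\ne i}X^t_j$ and independent $X^t_i,\hat X^t_i\in[0,u_i]$; then for every realisation $|r^t-\hat r^t| = \alpha\,|X^t_i-\hat X^t_i| \le \alpha u_i \le \lambda^t$.

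Finally I would assemble~\eqref{eq:MP}. Writing $g(c,x)$ for the probability that $\alpha(c+x)+\beta+N^t\in R$ with $N^t\sim Lap(\lambda^t/\epsilon)$, the one–dimensional Laplace computation (the ratio of Laplace densities at two points at distance $\le\lambda^t$ is at most $\exp(\epsilon)$ at scale $\lambda^t/\epsilon$) gives
\[
g(c,x) \le \exp(\epsilon)\,g(c,y)\qquad\text{for all }c\text{ and all }x,y\in[0,u_i].
\]
For fixed $c$ this yields $\mathbb{E}_{X^t_i}[g(c,X^t_i)] \le \exp(\epsilon)\,g(c,y)$ for every $y$, hence $\mathbb{E}_{X^t_i}[g(c,X^t_i)] \le \exp(\epsilon)\,\mathbb{E}_{\hat X^t_i}[g(c,\hat X^t_i)]$; averaging over the common $c$ gives $Pr[\mathcal{A}(Z^t)\in R]\le\exp(\epsilon)\,Pr[\mathcal{A}(\hat Z^t)\in R]$, which is precisely the $(\epsilon,\mathcal{S}^t_{pairs},Q^t)$-BP condition since $(S^t,\hat S^t)$, $R$ and $t$ were arbitrary. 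The hard part is step (iii): arguing that the DP inequality survives when the mechanism's nominal input $Z^t$ is a random function of the protected $S^t$. The coupling of the non-differing households together with the \emph{uniform}-over-$[0,u_i]$ sensitivity bound is exactly what closes the mixture-over-$Z^t$ argument; I would state this as a short lemma (DP is preserved under pre-composing the input with a randomisation whose conditional law, given the coordinates that change, stays within a fixed-sensitivity family). The remaining pieces — certifying $i\in\kappa^{t,\theta}$ and the $\max$ over $\Theta$ — are routine bookkeeping once the definitions are specialised to our secrets and constraints.
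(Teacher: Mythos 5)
Your proof is correct and follows essentially the same route as the paper's: bound the change in the optimal rate across a Blowfish neighbour by $\alpha u_i \le \lambda^{t,\theta} \le \lambda^t$ (certifying $i \in \kappa^{t,\theta}$ from the nonzero-prior constraint), apply the Laplace ratio bound uniformly over the differing household's consumption range $[0,u_i]$ so that it survives marginalisation over $X^t_i$, cancel the identically distributed contributions of the remaining households, and finish with the max over $\kappa^{t,\theta}$ and over $\Theta$. The paper executes the identical decomposition by explicit integral manipulations with a reverse-triangle-inequality bound centred at $u_1/2$, whereas you package the same steps as a coupling plus the standard Laplace-mechanism lemma; this is a presentational rather than a substantive difference.
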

\begin{proof}
	\label{proof}
	We aim to calculate parameter  $\lambda^{t,\theta} \in \R_+$ such that adding noise sampled from $Lap(\lambda^{t,\theta} / \epsilon)$ to $r^t$ satisfies \eqref{eq:MP} for every pair of Blowfish neighbors under every possible CDHMM. 
	For each model characterized by $\theta\in \Theta$, we try to find the adequate noise variance that keeps the occupancy state of household $1$ private and then extend it to other users. 
	
	Consider a Blowfish neighboring pair $(M^l,M^k)$. 
	Without loss of generality we can assume $M^l_1=1$, $M^k_1=0$, $M^l_{2:N}=M^k_{2:N}$.
	The released rate is a linear combination of $X^t_i$ and $N^t$, $\hat{r}^t=\alpha \sum_{i=1}^N X^t_i + \beta + N^t$. 
	Recalling the probability density of the Laplace distribution and the conditional probability density of $X^t_i$, $f(X^t_i|S^t_i,\hat{r}^{t'})$, we derive the probability of observing a rate in a range $R$ conditioned on the state.
	For every instantiation $x_i^t$ of the random variables $X_i^t, ~\forall i \in \{1,\cdots,N\}$,
	\begin{IEEEeqnarray} {C}
	\label{eq:pr1}
	Pr[\hat{r}^t \in R |S^t=M^l] = 
	\int \limits_{0}^{u_N}{\cdots} \int \limits_{0}^{u_1}   \prod_{\ell=1}^{N}f(x^t_\ell|M^l_\ell,\hat{r}^{t'})\nonumber \\
	\int \limits_{R} \frac{\epsilon}{2\lambda^{t,\theta}_1} exp(-\epsilon|\hat{r}^t-\alpha \sum_{i=1}^N x^t_i{-}\beta| / \lambda^{t,\theta}_1) d_{\hat{r}^t}
	d_{x^t_1} ... d_{x^t_N}.%
	\end{IEEEeqnarray}
	
	Let $g(\hat{r}^t)\mathrel{\mathop:}=(\hat{r}^t{-}\alpha (\sum_{i=2}^N x^t_i{+}\frac{u_1}{2}){-}\beta)/\alpha$. It follows that
	\begin{IEEEeqnarray}{C}
	|g(\hat{r}^t)|{+}|x^t_1{-}\frac{u_1}{2}| \leq
	|g(\hat{r}^t){-} (x^t_1{-}\frac{u_1}{2})| \leq 
	|g(\hat{r}^t)|{-} |x^t_1{-}\frac{u_1}{2}| \nonumber \\
	\Rightarrow 
	|g(\hat{r}^t)|{+}\frac{u_1}{2} \leq 
	|g(\hat{r}^t){-} (x^t_1{-}\frac{u_1}{2})| \leq
	|g(\hat{r}^t)|{-}\frac{u_1}{2}. \label{eq:pr3}
	\end{IEEEeqnarray}
	
	In the above, the first inequality is due to applying the triangle inequality and the second one relies on $x^t_1\in[0,u_1]$.
	Substituting \eqref{eq:pr3} in \eqref{eq:pr1} obtains	\footnote{The terms $e^x$ and $exp(x)$ are used interchangeably.}
	\begin{IEEEeqnarray}{C}
	\int \limits_{0}^{u_N}{\cdots}\int \limits_{0}^{u_1} 
	\prod_{\ell=1}^{N}f(x^t_\ell|M^l_\ell,\hat{r}^{t'})
	\int \limits_{R}
	e^{\frac{-\epsilon\alpha}{\lambda^{t,\theta}_1}(g(\hat{r}^t){+}\frac{u_1}{2})} d_{\hat{r}^t} d_{x^t_1} ... d_{x^t_N}  \nonumber
	\\
	\leq 2\lambda^{t,\theta}_1 Pr[\hat{r}^t \in R|S^t=M^l] / \epsilon 	\leq \nonumber 
	\\ 
	\int \limits_{0}^{u_N}{\cdots}\int \limits_{0}^{u_1} 
	\prod_{\ell=1}^{N}f(x^t_\ell|M^l_\ell,\hat{r}^{t'})
	\int \limits_{R}
	e^{\frac{-\epsilon\alpha}{\lambda^{t,\theta}_1}(g(\hat{r}^t){-}\frac{u_1}{2})} d_{\hat{r}^t} d_{x^t_1} ... d_{x^t_N}. \nonumber	
	\end{IEEEeqnarray}
	
	All the above integrals, except the one on $x^t_1$, and $g(\hat{r}^t)$ are independent of $x^t_1$. 
	We remove the integral on $x^t_1$ by integrating $f(x^t_1|M^l_1,\hat{r}^{t'})$ over $0$ to $\infty$, yielding $1$.
	\begin{IEEEeqnarray}{C} 
	\int \limits_{0}^{u_N}{\cdots}\int \limits_{0}^{u_1} 
	\prod_{\ell=2}^{N}f(x^t_\ell|M^l_\ell,\hat{r}^{t'})
	\int \limits_{R}
	e^{\frac{-\epsilon\alpha}{\lambda^{t,\theta}_1}(g(\hat{r}^t){+}\frac{u_1}{2})} d_{\hat{r}^t} d_{x^t_2} ... d_{x^t_N}  \nonumber \\
	\leq 2\lambda^{t,\theta}_1 Pr[\hat{r}^t \in R|S^t=M^l]/\epsilon 	\leq \nonumber \\ 
	\int \limits_{0}^{u_N}{\cdots}\int \limits_{0}^{u_1} 
	\prod_{\ell=2}^{N}f(x^t_\ell|M^l_\ell,\hat{r}^{t'})
	\int \limits_{R}
	e^{\frac{-\epsilon\alpha}{\lambda^{t,\theta}_1}(g(\hat{r}^t){-}\frac{u_1}{2})} d_{\hat{r}^t} d_{x^t_2} ... d_{x^t_N}. \nonumber
	\end{IEEEeqnarray}
	
	As $M^l_i=M^k_i ~\forall i\in\{2,\cdots,N\}$, repeating the procedure for $\hat{S}^t$ and dividing the emerged inequalities, we get
	\begin{IEEEeqnarray} {C}
	\label{eq:pr5}
	e^{\frac{-\epsilon\alpha}{\lambda^{t,\theta}_1} u_1}
	\leq \frac{Pr[\hat{r}^t \in R |S^t=M^l]}{Pr[\hat{r}^t \in R |\hat{S}^t=M^k]}
	\leq e^{\frac{\epsilon\alpha}{\lambda^{t,\theta}_1} u_1}.
	\end{IEEEeqnarray} 
	To keep data of user \textit{1} private, \eqref{eq:pr5} must be bounded by $e^{- \epsilon}$ and $e^\epsilon$, that is equivalent to $ \alpha u_1 \leq \lambda^{t,\theta}_1$.
	
	As shown for $(M^l,M^k)$, if datasets of a neighboring pair differ in the occupancy of  house $i$, adding noise sampled from $Lap(\lambda^{t,\theta}_i / \epsilon)$ with $\lambda^{t,\theta}_1 {\geq} \alpha u_i$ suffices. 
	Accordingly, a noise distribution of $Lap(\lambda^{t,\theta}/ \epsilon)$ with $\lambda^{t,\theta} =\alpha \max_{h \in \kappa^{t,\theta}} u_h$ ensures privacy given model $\theta$. 
	We repeat this procedure $\forall \theta\in\Theta$ and use the largest noise deviation to provide privacy if the data was generated by each model in the given class. 
\end{proof}

As we have developed a real-time mechanism, potential information acquisition due to future releases does not directly affect the noise distribution at the current time interval.
We propose the following theorem to study the mechanism's privacy observing the released rates over $T$ time steps.
\begin{theorem}\label{compos}
	Algorithm~\ref{base} is ($\epsilon T,\mathcal{S}_{pairs},Q$)-Blowfish private observing $T$ rates where $ \mathcal{S}_{pairs} = (\mathcal{S}_{pairs}^1,...,\mathcal{S}_{pairs}^T)$ and $Q = (Q^1,...,Q^T)$.
\end{theorem}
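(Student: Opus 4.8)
The plan is to prove Theorem~\ref{compos} by a sequential composition argument, viewing Algorithm~\ref{base} as the adaptive composition of $T$ per-step mechanisms $\mathcal{A}^1,\dots,\mathcal{A}^T$, where $\mathcal{A}^t$ takes the occupancy state $S^t$ (through its consumption $Z^t$) together with the already-published prefix $(\hat r^1,\dots,\hat r^{t-1})$ and returns $\hat r^t$. The first and most important observation is structural: the noise scale $\lambda^t$ produced in Lines~\ref{alg:pri1}--\ref{alg:propnoise} is a deterministic function of the prefix $(\hat r^1,\dots,\hat r^{t-1})$ and of the fixed model class $\Theta$ alone --- it is computed from the priors $P^{t|t-1,\theta}$ of \eqref{eq:pri}, which by \eqref{eq:post} depend on the previously published rates but never on the true occupancy sequence. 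Hence, conditioned on any fixed prefix, $\mathcal{A}^t(\cdot\,;\hat r^{1:t-1})$ is exactly the Laplace mechanism analysed in Theorem~\ref{main}, now with the priors induced by that prefix, and is therefore $(\epsilon,\mathcal{S}^t_{pairs},Q^t)$-Blowfish private. In density form: for any Blowfish-neighbor states $(S^t,\hat S^t)$ compatible with $Q^t$ the output densities obey $e^{-\epsilon}\le p_t(\hat r^t\mid S^t,\hat r^{1:t-1})\,/\,p_t(\hat r^t\mid \hat S^t,\hat r^{1:t-1})\le e^{\epsilon}$ pointwise in $\hat r^t$, where $p_t(\cdot\mid S^t,\hat r^{1:t-1})$ is the $Lap(\lambda^t/\epsilon)$ density centred at $\alpha Z^t+\beta$ and averaged over the consumption densities $f(X^t_i\mid S^t_i,\hat r^{t'})$, exactly as in the proof of Theorem~\ref{main}.

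Next I would fix the composed neighbor relation in the stacked way: histories $\mathbf{D}=(D^1,\dots,D^T)$ and $\hat{\mathbf{D}}=(\hat D^1,\dots,\hat D^T)$ are Blowfish neighbors for $(\mathcal{S}_{pairs},Q)$ if at every $t$ the pair $(D^t,\hat D^t)$ is either equal or a discriminative pair from $\mathcal{S}^t_{pairs}$ satisfying $Q^t$ --- intuitively, the occupancy of a single household toggled over the whole horizon. For such a pair the chain rule factors the joint output density as $p(\hat r^{1:T}\mid \mathbf D)=\prod_{t=1}^{T}p_t(\hat r^t\mid D^t,\hat r^{1:t-1})$, because given the prefix $\hat r^t$ depends on the history only through $D^t=S^t$ --- via the consumption $Z^t=\sum_i X^t_i$ with each $X^t_i$ drawn independently from $f(\cdot\mid S^t_i,\hat r^{1:t-1})$ --- and fresh Laplace noise whose scale is already fixed by the prefix. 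The scale $\lambda^t$ appearing in the $t$-th factor is identical for $\mathbf D$ and $\hat{\mathbf D}$ once the output tuple $\hat r^{1:T}$ is fixed, since it depends only on that prefix; dividing the two factorizations term by term and applying the per-step bound of the previous paragraph gives $p(\hat r^{1:T}\mid \mathbf D)\,/\,p(\hat r^{1:T}\mid \hat{\mathbf D})\le\prod_{t=1}^{T}e^{\epsilon}=e^{\epsilon T}$ for every output tuple. Integrating this pointwise density-ratio bound over an arbitrary measurable $R\subseteq Range(\mathcal{A})$ (first over product sets, then by the standard extension) yields $Pr[\mathcal{A}(\mathbf D)\in R]\le e^{\epsilon T}Pr[\mathcal{A}(\hat{\mathbf D})\in R]$, which is $(\epsilon T,\mathcal{S}_{pairs},Q)$-Blowfish privacy.

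The step I expect to be the main obstacle is the adaptivity and constraint bookkeeping behind the first paragraph: one must argue carefully that $\lambda^t$, and hence the whole per-step mechanism $\mathcal{A}^t$, sees the hidden data only through the published prefix, so that conditioning on that prefix genuinely reduces $\mathcal{A}^t$ to an instance covered by Theorem~\ref{main} and makes the term-by-term division legitimate --- this is exactly the point at which a naive, data-dependent choice of noise scale would break the composition. A related subtlety is the meaning of the composed probabilistic constraint $Q=(Q^1,\dots,Q^T)$: since each $Q^t$ (``having nonzero prior at time $t$'') is itself prefix-dependent, one has to check that in the pointwise argument both histories are evaluated against the \emph{same} prefix, so that $Q^t$ is applied consistently and both $D^t,\hat D^t$ lie in the constrained universe at step $t$. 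Finally, since the outputs live in the continuous domain $\R_+$, one should note that all the densities involved (Laplace noise convolved with the consumption densities $f(X^t_i\mid S^t_i,\hat r^{t'})$) exist, which is what makes the ``bound the density ratio, then integrate'' passage valid.
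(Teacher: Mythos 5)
Your proposal is correct and follows essentially the same route as the paper: factor the joint distribution of the published rates given the state sequence into a product of $T$ per-step conditional probabilities, bound each factor by $e^{\epsilon}$ via Theorem~\ref{main}, and conclude the $e^{\epsilon T}$ bound. If anything, your version is more careful than the paper's, which writes the emission factors as $Pr[\hat{r}^t\in R^t\mid S^t]$ without conditioning on the published prefix even though both the noise scale $\lambda^t$ and the consumption densities $f(X^t_i\mid S^t_i,\hat{r}^{t'})$ depend on it; your explicit conditioning on $\hat{r}^{1:t-1}$ and the observation that $\lambda^t$ is the same for both neighboring histories once the prefix is fixed close that gap.
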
  
\begin{proof}
	Assume rates $\hat{r}^1,\cdots,\hat{r}^T$ are published by Algorithm \ref{base}.
	Let $S=(S^1,\cdots,S^T)$ and $\hat{S}=(\hat{S}^1,\cdots,\hat{S}^T)$ be Blowfish neighbors ${\forall}t\in\{1,\cdots,T\}$.
	The probability of being at states $S$ and observing rates $\hat{r}^t \in R^t$ is
	\begin{IEEEeqnarray}{C}
	\label{eq:prob}
	Pr[S,\hat{r}^1 \in R^1,\cdots,\hat{r}^T \in R^T]\nonumber \\
	=\pi_{S^1}\prod_{t=2}^T Pr[S^t|S^{t{-}1}] 
	\prod_{t=1}^T Pr[\hat{r}^t \in R^t|S^t] =Pr[S] \nonumber \\
	\times \prod_{t=1}^T Pr[\hat{r}^t \in R^t|S^t] \leq Pr[S]\prod_{t=1}^T e^\epsilon Pr[\hat{r}^t \in R^t|\hat{S}^t] \label{eq:comp},
	\end{IEEEeqnarray}
	where~\eqref{eq:comp} is true because the mechanism is ($\epsilon,\mathcal{S}^t_{pairs},Q^t$)-Blowfish private at each step. Thus, $Pr[\hat{r}^1 \in R^1,\cdots,\hat{r}^T \in R^T|S]/Pr[\hat{r}^1 \in R^1,\cdots,\hat{r}^T \in R^T|\hat{S}]\leq e^{\epsilon T}$.
\end{proof}
A similar approach to quantify the information leakage due to answering a query multiple times is sequential composition theorem\cite{DPbook}\cite{Blowfish}. 
Even though Theorem \ref{compos} yields exactly the same results as applying the composition theorem, it is important to note that the latter one cannot be used because unlike the setting in [1], in RTP the queries are not answered on the same dataset (both $S^t$ and $Z^t$ may change by time).  
\section{SIMULATION} \label{sim}
We evaluate our proposed algorithm on a synthetic dataset obtained by simulating a non-homogeneous CDHMM, formulating the relationship between the occupancy state, the power consumption, and the electricity rates. 
For simplicity, $\Theta$ is assumed to be a single model in this case.
We assume that all households' consumption data at a given time step $t$ are independent, thereby reducing the problem to that of simulating $N=1000$ individual CDHMMs.

\begin{figure*}[tb]
	\centering
	\begin{subfigure}{0.3\linewidth}
		\includegraphics[width=\linewidth]{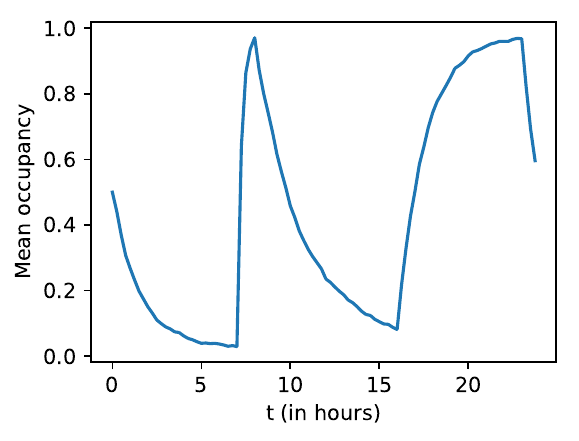}
		\caption{Mean occupancy}
		\label{fig:sim_mc}
	\end{subfigure}
	\begin{subfigure}{0.3\linewidth}
		\includegraphics[width=\linewidth]{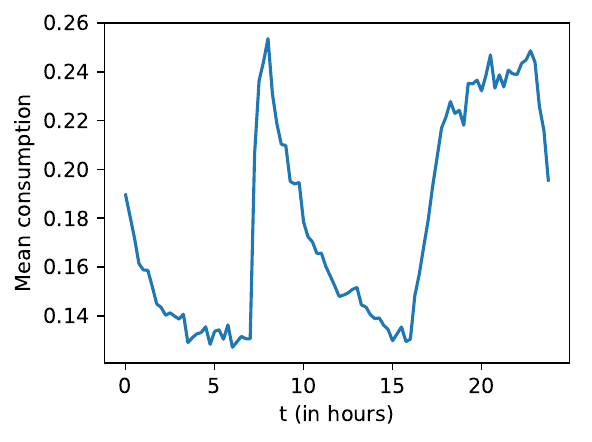}
		\caption{Mean consumption}
		\label{fig:sim_obs}
	\end{subfigure}
	\begin{subfigure}{0.3\linewidth}
		\includegraphics[width=\linewidth]{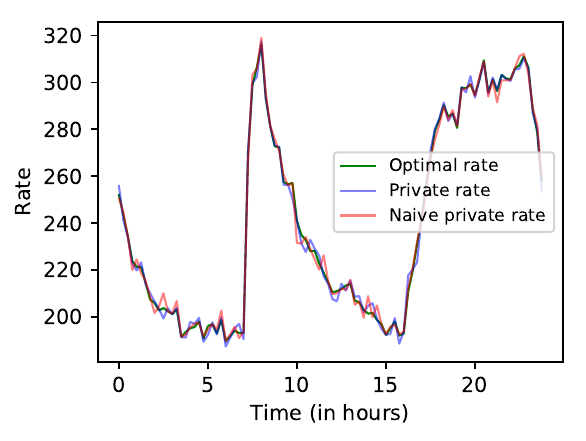}
		\caption{Calculated rates}
		\label{fig:sim_rate}
	\end{subfigure}
	\caption{Simulation outcomes}
\end{figure*}

\subsection{Simulation procedure}
Each Markov chain is given by a time-dependent transition matrix.
The probabilities in this matrix are obtained by perturbing a base matrix $\mathbf{A}^{t,\theta}$ with additive Gaussian noise in order to simulate differences between habits of individuals. 
For simplicity, we divide the day into four intervals (morning 7-8 AM, noon 8 AM-16 PM, evening 16-23 PM, and night 23 PM-7 AM) during which the transition probabilities remain constant. Hence, there are only four unique matrices $\mathbf{A}^{t,\theta_1}, \cdots, A^{t,\theta_4}$.
The transition probabilities are chosen such that there is
a 5$\%$ chance to leave the house and a 98$\%$ chance to wake up at some point during the morning,
a 95$\%$ chance to leave the house and a 5$\%$ chance to come home at some point during noon,
a 5$\%$ chance to leave the house and a 99$\%$ chance to come home at some point during the evening,
a 99.5$\%$ chance to go to sleep and a 1$\%$ chance to get up at some point during the night.
Note that these numbers are mere guesses.
We simulate this Markov chain for 24 hours, where each hour is divided into four time steps (i.e. $T=96$). 
Fig.~\ref{fig:sim_mc} shows the mean occupancy of all households in the simulation. The four periods of the day can be recognized clearly by the sharp occupancy changes.

For each CDHMM, the observations are modeled by uniform distributions $X^t_i \sim U(0, u_i)$ where $u_i \sim U(0, u_{\max})$. We choose $u_{\max} {=} 1$ when the household is occupied and $u_{\max} {=} 0.5$ otherwise.
The observation probability distribution can be calculated by~\eqref{eq:pr1} accordingly.
Fig.~\ref{fig:sim_obs} shows the mean consumption of all households in the simulation.

\subsection{Results}
Based on the simulated consumption, we compute the optimal rates according to Equation~\eqref{eq:fr}, as well as the Blowfish-private rate by Algorithm~\ref{base} with $\epsilon = 0.5$ in Definition~\ref{privacy}.
The generation cost, $J(Z^t)$, is a quadratic function in~\cite{cost} with $\alpha {=} 1, \beta {=} 62.5$ (for simplicity, the equation is scaled).

We also contrast the utility of the Blowfish mechanism with that of a naive Laplace mechanism, which does not accommodate any correlation model and consequently sets no constraints.
Eliminating the constraints results in $\kappa^{t, \theta}$, in Algorithm~\ref{base} to be the set of all houses and results in a noise distribution of $Lap(\alpha u_{\max} / \epsilon)$ in every time step.
Hence, disregarding the model reduces the privacy notion introduced in~\ref{Blowfish} to group DP~\cite{DPbook}, which assumes the entire time steps are fully correlated. 
We refer to the outputs from the aforementioned mechanism by \textit{naive private rates} and the rates obtained from Algorithm~\ref{base} by \textit{private rates} in Fig.~\ref{fig:sim_rate}.

The discriminating term between the two mechanisms is the set over which the maximum consumption bound of users is calculated.
While in the naive mechanism the search for the maximum bound is done over all houses, in Algorithm~\ref{base} only a subset of houses with non-zero prior are examined.
From Fig.~\ref{fig:sim_rate}, it can be seen that the Blowfish mechanism uses a smaller noise parameter than the naive mechanism in some time steps.

The deviation from the optimal rate in Fig.~\ref{fig:sim_rate} is measured in terms of the Root Mean Squared Relative Errors (RMSRE), 
$RMSRE \mathrel{\mathop:}= \frac{1}{T}\sqrt{\sum_{t=1}^T ((\hat{r}^t - r^t)/r^t)^2}.$
For our chosen parameter values we obtain a RMSRE of $1.089e{-}3$ for the Blowfish mechanism and $1.149e{-}3$ for the naive mechanism, which is $5.5\%$ more. 
Both errors are considered low compared to the observed rate range. 

\section{Conclusions}
We proposed a privacy-preserving mechanism to publish electricity rates over correlated consecutive time steps dynamically.
Our approach is based on the notion of Blowfish privacy.
We derived constraints so that if there is enough evidence showing a house is not occupied, this house is removed from the set of protected houses.
This in turn improves the choice of perturbation variance for maximized utility. 
Furthermore, we showed that the noise variance used by the proposed mechanism is at least as small as the one obtained with the only alternative in~\cite{DPbook} and can be smaller in general. 
We validated the performance improvement with respect to this alternative solution on a synthesized dataset.
Our current work aims to explore potential mechanisms that ensure the privacy degradation is sub-linear with the time horizon. 
\addtolength{\textheight}{-4cm}

\bibliographystyle{IEEEtran}
\scriptsize{
	\bibliography{IEEEabrv,references}
}

\end{document}